\newtheorem{prop}{Proposition}
\newtheorem{definition}{Definition}
\newcommand{\wb}{\bar}
\newcommand{\wt}{\tilde}
\newcommand{\wh}{\hat}
\newcommand{\ws}{\dot}
\newcommand{\us}[1]{\underaccent{\dot}{{#1}}}
\renewcommand{\th}[1]{\wh{\wt{#1}}}
\newcommand{\hb}[1]{\wb{\wh{#1}}}
\newcommand{\bt}[1]{\wt{\wb{#1}}}
\newcommand{\cP}{\mathcal{P}}
\newcommand{\cQ}{\mathcal{Q}}
\newcommand{\cR}{\mathcal{R}}
\newcommand{\cB}{\mathcal{B}}
\newcommand{\cH}{\mathcal{H}}
\newcommand{\cF}{{f}}
\newcommand{\mt}{\mathsf{M}}
\newcommand{\mm}{\mathsf{m}}
\renewcommand{\mr}{\mathsf{r}}
\renewcommand{\ms}{\mathsf{s}}
\newcommand{\dd}{*}
\newcommand{\ecp}{{\mathbb{C}\cup\{\infty\}}}
\newcommand{\sn}{\mathop{\mathrm{sn}}\nolimits}
\newcommand{\CR}[4]{{\frac{({#1}-{#2})({#3}-{#4})}{({#1}-{#3})({#2}-{#4})}}}
\begin{document}
\paper{The Schwarzian variable associated with discrete KdV-type equations}
\author{James Atkinson and Nalini Joshi}
\address{School of Mathematics and Statistics, The University of Sydney, NSW 2006, Australia.}
\date{\today}
\begin{abstract}
We present a new construction related to systems of polynomials which are consistent on a cube.
The consistent polynomials underlie the integrability of discrete counterparts of integrable partial differential equations of Korteweg-de Vries-type (KdV-type).
The construction reported here associates a Schwarzian variable to such systems.
In the generic case, including the primary model Q4, the new variable satisfies the lattice Schwarzian Kadomtsev-Petviashvili (KP) equation in three dimensions. 
For the degenerate sub-cases of Q4 the same construction reveals an invertible transformation to the lattice Schwarzian KdV equation.
\end{abstract}

\section{Introduction}\label{INTRO}
The feature of M\"obius point symmetry group of an equation is synonymous with the presence of this group's differential invariant, the {\it Schwarzian derivative}, or in the discrete setting the {\it cross-ratio}.
Schwarzian forms of integrable partial differential equations appeared first in the work of Weiss through the Painlev\'e analysis \cite{wei} as the {\it singularity manifold equation} (cf. \cite{wtc}).
For a review of the continuous and discrete Schwarzian integrable systems see \cite{nij}.

The discrete KdV-type equations were classified using the multidimensional consistency integrability criterion in pioneering work of Adler, Bobenko and Suris (ABS) in \cite{abs1,abs2} (cf. \cite{hie,atk2,bol}).
The purpose of this paper is to introduce a new variable associated with these equations. 
It is defined up to the action of the M\"obius group, and is therefore characterised by Schwarzian systems.
We will focus on the contribution this new definition makes to the transformation theory of these equations.

It can be useful to partition the ABS list of equations into two parts. In the first part we put equations which are equivalent to the older NQC equation \cite{nqc} and its parameter sub-cases, namely Q3$_{\delta=0}$, Q1, A2, A1, H3$_{\delta=0}$ and H1 in the nomenclature of \cite{abs1}.
Although all ABS equations share the core characteristic of multidimensional consistency, there are several features of the NQC-type equations which do not directly transfer to the second part of the ABS list.
One such feature is the connection to a discrete KP-type equation. 
Inherent from the author's approach the NQC equation was connected with the discrete KP-type equation obtained in \cite{ncwq}, specifically the natural embedding of the former equation in three dimensions satisfies the latter.
This fulfils the expectation of dimensional reduction that is familiar from the continuous theory. 
It is an important feature; revealing the wider natural context for the integrable systems.
For the second and newer part of the ABS list, namely equations Q4, Q3$_{\delta=1}$, Q2, H3$_{\delta=1}$ and H2, the connection with a discrete KP-type equation should be expected, but does not appear in the direct way as for the NQC case.
This is resolved on the level of the variable introduced here, because for all ABS equations the new variable satisfies the lattice Schwarzian KP equation in three dimensions.

The Schwarzian construction also enables us to systematically connect all equations on the ABS list, with the exception of Q4, to the lattice Schwarzian KdV equation.
This complements several relationships of Miura and B\"acklund type which are already known \cite{nc,atk1,nah}.
It also shows that allowing for this class of transformations there are at most two distinct ABS equations, which provides a discrete counterpart to the result of \cite{ssy}.
A virtue of the transformation to the Schwarzian variable is a generalised transitivity property which means the transformation inherently captures also the {\it difference} between the equations which it connects.

The paper is organised as follows.
Many elements of the construction are introduced on the level of the Riccati equation in Section \ref{ODE}.
Section \ref{ABT} contains the definition of the Schwarzian variable in the context of the Riccati equations associated with the B\"acklund transformations of the discrete KdV-type equations. 
The definition is extended to higher dimensions in Section \ref{3D}.
The Q4 situation is considered in Section \ref{ADLER} and the remaining systems of \cite{abs1} in Section \ref{DEGENS}.
In the final section we describe what happens when we apply the definition to systems of consistent polynomials less symmetric than those of \cite{abs1} by considering several examples.

\section{A Schwarzian variable associated with the Riccati map}\label{ODE}
Consider the scalar discrete Riccati equation
\begin{equation}
\cR(v,\wt{v}) := a_0 + a_1 v + a_2 \wt{v} + a_3 v \wt{v} = 0,\label{gr}
\end{equation}
where $v=v(n)$ and $\wt{v}=v(n+1)$ are values of the dependent variable $v : \mathbb{Z}\rightarrow \ecp$ as a function of the independent variable $n\in\mathbb{Z}$, and $a_0\ldots a_3 : \mathbb{Z}\rightarrow\mathbb{C}$ are non-autonomous coefficients of the polynomial $\cR$.
The values of the variable $v$ are naturally associated with the vertices of a one-dimensional lattice, and the polynomial $\cR$ connects consecutive vertices and so is naturally associated with the edges.
A degenerate situation occurs if the polynomial on some edge is reducible, we will assume explicitly that this is not the case, specifically that
\begin{equation}
(\partial_1\cR)(\partial_2\cR)-(\partial_1\partial_2\cR)\cR = a_1a_2-a_0a_3\neq 0,\label{redR}
\end{equation}
where we have introduced the convenient notation $\partial_i$ for partial differentiation of the polynomial with respect to its $i^{th}$ argument.

There is a close relationship between the Riccati equation and the group of M\"obius transformations,
\begin{equation}
\mt := \left\{ x\mapsto(ax+b)/(cx+d) : a,b,c,d\in \mathbb{C}, ad\neq bc \right\}.
\end{equation}
This is more discernible if we reformulate (\ref{gr}) as an equation for a function $\ms:\mathbb{Z}\rightarrow\mt$,
\begin{equation}
\wt{\ms} = \mr\cdot \ms,\label{mr}
\end{equation}
where $\mr=x\mapsto-(a_1x+a_0)/(a_3x+a_2)$.
The values of the functions $\ms$ and $\mr$ are naturally associated to the vertices and edges respectively of the one-dimensional lattice.
The virtue of the reformulation (\ref{mr}) is that both of these functions take values in the group $\mt$.
The equations (\ref{gr}) and (\ref{mr}) are equivalent in the sense that from the general solution of one we may construct the general solution of the other. 
This is clear because given $\ms$ satisfying (\ref{mr}), $\ms(c)$ satisfies (\ref{gr}) for any constant $c$.
And conversely, if the general solution of (\ref{gr}) is known, then certainly three particular solutions are known, say $v_1$, $v_2$ and $v_3$, and writing $\ms(1)=v_1$, $\ms(\infty)=v_2$ and $\ms(0)=v_3$ determines uniquely a function $\ms$ satisfying (\ref{mr}). 
Note that constants other than $1$, $\infty$ and $0$ would also be fine, provided they were distinct, however this choice results in the convenient expression
\begin{equation}
\ms^{-1}= x\mapsto \CR{v_1}{v_2}{v_3}{x}.\label{invs}
\end{equation}
By $\ms^{-1}$ we mean the unique function $\ms^{-1}:\mathbb{Z}\rightarrow\mt$ taking the value in $\mt$ which is the group inverse of the value of $\ms$.

We remark that letting $v$ take values in $\ecp$ is now more obviously natural because elements of $\mt$ permute this set.

We now introduce the object of principal concern, a function $\varphi:\mathbb{Z}\rightarrow\ecp$ associated with (\ref{gr}) which is defined in terms of a fixed but as yet un-specified function $w:\mathbb{Z}\rightarrow\ecp$ as
\begin{equation}
\varphi:=\ms^{-1}(w)\label{phidef},
\end{equation}
where $\ms$ is a solution of (\ref{mr}).
The solution of (\ref{mr}) is unique up to the transformation $\ms\rightarrow\ms\cdot\mm$ for arbitrary $\mm\in\mt$, and therefore the definition (\ref{phidef}) determines $\varphi$ up to the transformation $\varphi\rightarrow\mm(\varphi)$.
This connects the definition (\ref{phidef}) with the following one-dimensional Schwarzian difference equation:
\begin{equation}
\CR{\varphi}{\wt{\varphi}}{\wt{\wt{\varphi}}}{\wt{\wt{\wt{\varphi}}}}=
\CR{[\wt{\wt{\mr}}\cdot\wt{\mr}\cdot\mr](w)}{[\wt{\wt{\mr}}\cdot\wt{\mr}](\wt{w})}{\wt{\wt{\mr}}(\wt{\wt{w}})}{\wt{\wt{\wt{w}}}}.\label{seq}
\end{equation}
Here we have simply evaluated the cross-ratio of four consecutive values of $\varphi$, the right-hand-side follows by substituting for $\varphi$ from (\ref{phidef}) and using (\ref{mr}) together with the M\"obius invariance of the cross-ratio, therefore (\ref{seq}) is a consequence of our definition.
However this equation actually {\it characterises} $\varphi$ because any two solutions of (\ref{seq}) are M\"obius related: observe first, the equation is invariant under M\"obius transformations on $\varphi$, and second, the equation is third order, which ensures the existence of a M\"obius transformation sending any set of well-posed initial data to any other.

Notice that the right-hand-side of (\ref{seq}) does not depend on $\ms$, so solving this equation obtains $\varphi$ without requiring first the solution of (\ref{gr}).
There is however the following subtlety:
from the definition (\ref{phidef}) it is easy to see that $\varphi$ is constant if and only if $w$ is a solution of (\ref{gr}), whereas the equation (\ref{seq}) becomes ambiguous in this situation. 
So definition (\ref{phidef}) is the more robust.

We remark that the $\varphi$ variable can play a useful role when only two particular solutions of (\ref{gr}) are known.
If we label the known solutions $v_2$ and $v_3$ then the associated variable $\varphi$ may be found by integration of the homogeneous linear equation
\begin{equation}
(v_3-w)\cR(v_2,\wt{w})\wt{\varphi}-(v_2-w)\cR(v_3,\wt{w})\varphi = 0.\label{hle}
\end{equation}
This equation is the result of combining (\ref{invs}) and (\ref{phidef}) and demanding $v_1$ also satisfy (\ref{gr}).
Solving (\ref{hle}) obtains $\varphi$ from $v_2$ and $v_3$.
Once a particular solution of (\ref{hle}) is known (excluding $\varphi=0$ and $\varphi=\infty$) we can again combine (\ref{invs}) and (\ref{phidef}) to reconstruct solution $v_1$, and thus the general solution of (\ref{gr}). 
Therefore if two particular solutions are known, the transformation to the variable $\varphi$ reduces (\ref{gr}) to the homogeneous linear equation (\ref{hle})\footnote{This transformation is closely related to the one exploited in \cite{an2}, which corresponds to the choice $w=\infty$.}.

To conclude this preliminary section let us re-iterate that our intention so far has been to establish, separately from further considerations, a basic definition of $\varphi$, its M\"obius invariance, as well as some explicit formulae.
This is to emphasise aspects of the construction which are present already at the level of the Riccati equation.
They can be distinguished from features like the choice of $w$, a function which has thus far played a rather passive role but which is important later.

\section{Application to B\"acklund transformations}\label{ABT}
Well known B\"acklund transformations for discrete KdV-type equations are of the following generic form
\begin{equation}
\cB_1(u,\wt{u},v,\wt{v})=0, \quad \cB_2(u,\wh{u},v,\wh{v})=0.\label{gbt}
\end{equation}
Here $\cB_1$ and $\cB_2$ are autonomous polynomials of degree one in four variables, $u,v:\mathbb{Z}^2\rightarrow\ecp$ are dependent variables and $\wt{\phantom{u}}, \wh{\phantom{u}}$ denote shifts on these variables in the two lattice directions.
Such B\"acklund transformations connect quadrilateral lattice equations in $u$ and $v$,
\begin{equation}
\cP(u,\wt{u},\wh{u},\th{u})=0, \quad  \cP^\dd(v,\wt{v},\wh{v},\th{v})=0,\label{geqs}
\end{equation}
where $\cP$ and $\cP^\dd$ are also autonomous polynomials of degree one in four variables. 
The key property we assume is the consistency of the polynomials when associated to the faces of a cube as described in \cite{nw,bs1} - the polynomials $\cP$ and $\cP^\dd$ here lie on opposite faces.
Choosing $u$ to satisfy the equation on the left in (\ref{geqs}) results in the compatibility of (\ref{gbt}) as a system for $v$, and the function $v$ which emerges as the solution of this system then satisfies the equation on the right in (\ref{geqs}).

In the theory of such systems developed in \cite{abs1,abs2} the following polynomials play a fundamental role:
\begin{equation}
\fl \quad
\begin{array}{ll}
\cH_1:=(\partial_3\cB_1)(\partial_4\cB_1)-(\partial_3\partial_4\cB_1)\cB_1,\quad &
\cH_2:=(\partial_3\cB_2)(\partial_4\cB_2)-(\partial_3\partial_4\cB_2)\cB_2,\\
\cH_1^\dd:=(\partial_1\cB_1)(\partial_2\cB_1)-(\partial_1\partial_2\cB_1)\cB_1, &
\cH_2^\dd:=(\partial_1\cB_2)(\partial_2\cB_2)-(\partial_1\partial_2\cB_2)\cB_2.
\end{array}
\label{biquad}
\end{equation}
The first two of these arise in the non-degeneracy condition for (\ref{gbt}) as Riccati equations for $v$:
\begin{equation}
\cH_1(u,\wt{u})\neq 0, \quad \cH_2(u,\wh{u})\neq 0.\label{ndc}
\end{equation}
If these conditions hold the solution $u$ is said to be non-singular\footnote{The definition of singularity here, which is a singularity {\it of the B\"acklund transformation}, differs from the definition of singularity {\it of the equation} in \cite{abs2}, however for most examples these notions coincide.}.

If $u$ is a non-singular solution of the equation on the left in (\ref{geqs}), then the Riccati equations for $v$ in (\ref{gbt}) can be reformulated as the equivalent system
\begin{equation}
\wt{\ms}=\mr_1\cdot\ms, \quad \wh{\ms}=\mr_2\cdot\ms,\label{ssys}
\end{equation}
for a function $\ms:\mathbb{Z}^2\rightarrow\mt$. 
Here $\mr_1,\mr_2:\mathbb{Z}^2\rightarrow\mt$ are the M\"obius transformations $v\mapsto\wt{v}$ and $v\mapsto\wh{v}$ defined by (\ref{gbt}), which are naturally associated to the edges of the quadrilateral lattice.
The values of $\ms$ are associated to the vertices.
Compatibility of (\ref{ssys}) means simply that $\wh{\mr}_1\cdot\mr_2=\wt{\mr}_2\cdot\mr_1$.
This reformulation allows the definition of a Schwarzian variable $\varphi$ associated with a non-singular solution of the equation defined by $\cP$:
\begin{definition}\label{mudef}
Let $w:\mathbb{Z}^2\rightarrow \ecp$ be a fixed solution of the system
\begin{equation}
\cH_1^\dd(w,\wt{w})=0, \quad \cH_2^\dd(w,\wh{w})=0,\label{wdef}
\end{equation}
and $u:\mathbb{Z}^2\rightarrow \ecp$ be a non-singular solution of the equation on the left in (\ref{geqs}).
We refer to $\varphi$ as the Schwarzian variable associated to $u$ if $\varphi=\ms^{-1}(w)$ for some $\ms:\mathbb{Z}^2\rightarrow\mt$ satisfying the B\"acklund system (\ref{ssys}).
We denote the set of all such pairs of functions $(u,\varphi)$ by $\mu$.
\end{definition}
The solution of (\ref{ssys}) is unique up to the transformation $\ms\rightarrow\ms\cdot \mm$ for arbitrary $\mm\in\mt$.
Thus if $(u,\varphi)\in\mu$, then $(u,\varphi')\in\mu$ if and only if there exists $\mm\in\mt$ such that $\varphi'=\mm(\varphi)$.
Although it is immediate from the definition, this is a key property of the relation $\mu$, in particular it has the consequence that
\begin{equation}
(u',\varphi),(u,\varphi),(u,\varphi')\in\mu \ \Rightarrow \ (u',\varphi')\in\mu.\label{muprop}
\end{equation}
Property (\ref{muprop}) is a generalisation of transitivity to the situation where the domain of the relation is different than the co-domain.
It induces a partition on both domain and co-domain, and the relation lifts to a bijection between the equivalence classes.
In the co-domain this is simply partition by M\"obius equivalence, whereas the partition of the domain reveals a kind of symmetry of the equation defined by polynomial $\cP$.

Based on the established M\"obius invariance of the co-domain of $\mu$, we will further investigate this set of functions using the cross-ratio.
The cross-ratio of four values of $\varphi$ around an elementary quadrilateral is found to be
\begin{equation}
\CR{\varphi}{\wt{\varphi}}{\wh{\varphi}}{\th{\varphi}}=
\frac{([\wt{\mr}_2\cdot\mr_1](w)-\wt{\mr}_2(\wt{w}))(\wh{\mr}_1(\wh{w})-\th{w})}{([\wh{\mr}_1\cdot\mr_2](w)-\wh{\mr}_1(\wh{w}))(\wt{\mr}_2(\wt{w})-\th{w})}\label{s2d}
\end{equation}
immediately from Definition \ref{mudef}.
The right-hand-side of (\ref{s2d}) can be evaluated for particular examples of systems (\ref{gbt}), and examining the result will lead to insights about the functions in the co-domain of $\mu$.
More generally, M\"obius-invariant constraints on $\varphi$ can be obtained by considering the cross-ratio of values on any set of four distinct vertices of the lattice, the vertices of a single quadrilateral are the most primitive such set.
We remark that (\ref{s2d}) evaluates in the generic case, and without any assumption on $w$, to 
\begin{equation}
\CR{\varphi}{\wt{\varphi}}{\wh{\varphi}}{\th{\varphi}}=
\chi\frac{\cB_1(u,\wt{u},w,\wt{w})\cB_1(\wh{u},\th{u},\wh{w},\th{w})}{\cB_2(u,\wh{u},w,\wh{w})\cB_2(\wt{u},\th{u},\wt{w},\th{w})},
\label{s2de}
\end{equation}
where $\chi$ is independent of $w$.
$\chi$ is easily found as a rational expression in the coefficients of $\mr_1$, $\mr_2$, $\wh{\mr}_1$ and $\wt{\mr}_2$.
There are several ways to express it because the coefficients are themselves related due to the compatibility $\wh{\mr}_1\cdot\mr_2=\wt{\mr}_2\cdot\mr_1$.
Interestingly only its square seems to be expressible symmetrically, and is found to be
\begin{equation}
\chi^2=\frac{\cH_2(u,\wh{u})\cH_2(\wt{u},\th{u})}{\cH_1(u,\wt{u})\cH_1(\wh{u},\th{u})}.
\end{equation}

The choice of the fixed function $w$ is an important element of Definition \ref{mudef}\footnote{Generically $w$ satisfying (\ref{wdef}) is a solution of the equation on the right in (\ref{geqs}) which is singular throughout the lattice.}, it results in the reducibility of the four terms on the right-hand-side of (\ref{s2de}) when they are considered as polynomials in the four variables $u,\wt{u},\wh{u},\th{u}$. 
Note that the analog of this in the generic one-dimensional situation is not clear; choosing $w$ in this way is a possibility particular to the Riccati equations defined in terms of the consistent polynomials.

\section{Extension to higher dimensions}\label{3D}
In most cases the B\"acklund transformation described in the previous section embeds naturally in a three-dimensional system:
\begin{equation}
\cB_1(u,\wt{u},v,\wt{v})=0, \quad \cB_2(u,\wh{u},v,\wh{v})=0,\quad \cB_3(u,\wb{u},v,\wb{v})=0,\label{gbt3D}
\end{equation}
where now $u,v:\mathbb{Z}^3\rightarrow \ecp$ and $\wt{\phantom{u}}$, $\wh{\phantom{u}}$, $\wb{\phantom{u}}$ denote shifts in the three lattice directions.
Such extended B\"acklund transformations connect systems in $u$ and $v$:
\begin{equation}
\begin{array}{ll}
\cP_{12}(u,\wt{u},\wh{u},\th{u})=0,\qquad & \cP_{12}^\dd(v,\wt{v},\wh{v},\th{v})=0,\\
\cP_{23}(u,\wh{u},\wb{u},\hb{u})=0,& \cP_{23}^\dd(v,\wh{v},\wb{v},\hb{v})=0,\\
\cP_{31}(u,\wb{u},\wt{u},\bt{u})=0,& \cP_{31}^\dd(v,\wb{v},\wt{v},\bt{v})=0,
\end{array}
\label{geqs3D}
\end{equation}
where $\cP_{12}$ and $\cP_{12}^\dd$ are just a relabelling of $\cP$ and $\cP^\dd$ from before.
The property required is the consistency of all these polynomials on a four-dimensional hypercube.

The generalisation of the Schwarzian variable to this higher dimensional situation is fairly obvious, we give the details here only for clarity. 
The generalised definition will be important in order to establish the connection between the two-dimensional systems and the three-dimensional lattice Schwarzian KP equation.

We define polynomials $\cH_3$ and $\cH_3^\dd$ in the natural way extending (\ref{biquad}),
\begin{equation}
\eqalign{
\cH_3:=(\partial_3\cB_3)(\partial_4\cB_3)-(\partial_3\partial_4\cB_3)\cB_3,\\
\cH_3^\dd:=(\partial_1\cB_3)(\partial_2\cB_3)-(\partial_1\partial_2\cB_3)\cB_3.
}
\label{biquad3D}
\end{equation}
A solution of the system on the left in (\ref{geqs3D}) is then said to be non-singular if
\begin{equation}
\cH_1(u,\wt{u})\neq 0, \quad \cH_2(u,\wh{u})\neq 0, \quad \cH_3(u,\wb{u})\neq 0\label{ndc3D}
\end{equation}
throughout $\mathbb{Z}^3$.
If $u$ is a non-singular solution of the system on the left in (\ref{geqs3D}), then the equations (\ref{gbt3D}) for $v$ can be reformulated as the equivalent system
\begin{equation}
\wt{\ms}=\mr_1\cdot\ms, \quad \wh{\ms}=\mr_2\cdot\ms, \quad \wb{\ms}=\mr_3\cdot\ms \label{ssys3D}
\end{equation}
for a function $\ms:\mathbb{Z}^3\rightarrow\mt$.
This reformulation of the B\"acklund equations leads naturally to the generalised definition of the Schwarzian variable:
\begin{definition}\label{mudef3D}
Let $w:\mathbb{Z}^3\rightarrow \ecp$ be a fixed solution of the system
\begin{equation}
\cH_1^\dd(w,\wt{w})=0, \quad \cH_2^\dd(w,\wh{w})=0, \quad \cH_3^\dd(w,\wb{w})=0,\label{wdef3D}
\end{equation}
and $u:\mathbb{Z}^3\rightarrow \ecp$ be a non-singular solution of the system on the left in (\ref{geqs3D}).
We refer to $\varphi$ as the Schwarzian variable associated to $u$ if $\varphi=\ms^{-1}(w)$ for some $\ms:\mathbb{Z}^3\rightarrow\mt$ satisfying the B\"acklund system (\ref{ssys3D}).
\end{definition}
{\noindent The extension to higher dimensions is clear. In other words the Schwarzian-variable construction is compatible with the multidimensional consistency.}

\section{The Q4 Schwarzian variable}\label{ADLER}
In \cite{abs2} it is shown that the generic non-degenerate compatible system of polynomials (\ref{gbt}), (\ref{geqs}) may be taken, without loss of generality, in the form 
\begin{equation}
\cB_1=\cQ_{\alpha,\kappa},\quad\cB_2=\cQ_{\beta,\kappa},\quad\cP=\cP^\dd=\cQ_{\alpha,\beta},\label{Q4sys}
\end{equation}
where $\cQ_{\alpha,\beta}$ is the polynomial
\begin{equation}
\eqalign{
\cQ_{\alpha,\beta}(u,\wt{u},\wh{u},\wh{\wt{u}}):= \sn(\alpha)\left(u\wt{u}+\wh{u}\wh{\wt{u}}\right)-\sn(\beta)\left(u\wh{u}+\wt{u}\wh{\wt{u}}\right)\\
\qquad{\displaystyle -\sn(\alpha-\beta)\left(u\wh{\wt{u}}+\wt{u}\wh{u}-k\sn(\alpha)\sn(\beta)\left(1+u\wt{u}\wh{u}\wh{\wt{u}}\right)\right).}
}
\label{Q4}
\end{equation}
Thus the polynomials around the cube differ only in the value of parameters, these parameters enter the polynomial through the Jacobi elliptic function $\sn$ with modulus $k$.
This is the Q4 system in its Jacobi parameterisation.
It was originally discovered in \cite{adl} as the superposition principle for B\"acklund transformations of the Krichever-Novikov equation \cite{kn}, the Jacobi parameterisation was obtained in \cite{hie}.

We will consider the Schwarzian variable of the Q4 system with the fixed function $w$ taken to be
\begin{equation}
w=\sqrt{k}\sn(\zeta), \qquad \zeta = \zeta_0 + n\alpha + m\beta.\label{Q4w}
\end{equation}
Here $n,m\in\mathbb{Z}$ are the independent variables, they are incremented by the shifts $\wt{\phantom{u}}$ and $\wh{\phantom{u}}$ respectively.
Note that other solutions of (\ref{wdef}) are possible, for instance choosing $\zeta=\zeta_0+n\alpha-m\beta$ would lead to different results in what follows.
\begin{prop}\label{Q4p1}
If (\ref{gbt}), (\ref{geqs}) is the Q4 system (\ref{Q4sys}) and $w$ is the fixed function given in (\ref{Q4w}), then the following equation holds between a non-singular solution $u$ of the equation on the left in (\ref{geqs}) and its associated Schwarzian variable $\varphi$:
\begin{equation}
\CR{\varphi}{\wt{\varphi}}{\wh{\varphi}}{\th{\varphi}}=
\frac{p\us{q}(1+w\wt{w}s\us{p})(1+\wh{w}\th{w}s\us{p})(\wt{u}-\wt{\us{w}})(\wh{u}-\wh{\ws{w}})}{q\us{p}(1+w\wh{w}s\us{q})(1+\wt{w}\th{w}s\us{q})(\wt{u}-\wt{\ws{w}})(\wh{u}-\wh{\us{w}})},
\label{Q4S}
\end{equation}
in which we have used the notation
\begin{equation*}
\eqalign{
p=\sqrt{k}\sn(\alpha), \quad \us{p}=\sqrt{k}\sn(\alpha-\kappa),\\
q=\sqrt{k}\sn(\beta),\quad\us{q}=\sqrt{k}\sn(\beta-\kappa),\\ 
s=\sqrt{k}\sn(\kappa),\quad \us{w}=\sqrt{k}\sn(\zeta-\kappa), \quad \ws{w}=\sqrt{k}\sn(\zeta+\kappa).
}
\end{equation*}
\end{prop}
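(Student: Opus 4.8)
The plan is to begin from the general cross-ratio evaluation (\ref{s2de}), which already holds for any $w$ satisfying (\ref{wdef}), and to show that for the Q4 polynomials (\ref{Q4sys}) and the choice (\ref{Q4w}) its right-hand side collapses to (\ref{Q4S}). Everything turns on the reducibility, anticipated in the remark after (\ref{s2de}), of each of the four factors $\cB_1(u,\wt{u},w,\wt{w})$, $\cB_1(\wh{u},\th{u},\wh{w},\th{w})$, $\cB_2(u,\wh{u},w,\wh{w})$, $\cB_2(\wt{u},\th{u},\wt{w},\th{w})$ as a polynomial in its first two arguments. First I would make this precise: writing $\cB_1(u,\wt{u},w,\wt{w})=Au\wt{u}+Bu+C\wt{u}+D$ with $A,B,C,D$ functions of $w,\wt{w}$, the definition (\ref{biquad}) gives $\cH_1^\dd(w,\wt{w})=BC-AD$, and $AD=BC$ is exactly the condition for this bilinear form in $u,\wt{u}$ to split as a factor linear in $u$ times a factor linear in $\wt{u}$. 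Thus checking that (\ref{Q4w}) solves (\ref{wdef}) and obtaining the factorisation are one and the same computation.

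Second I would carry out this factorisation explicitly for Q4. Substituting $w=\sqrt{k}\sn(\zeta)$ and $\wt{w}=\sqrt{k}\sn(\zeta+\alpha)$ into $\cQ_{\alpha,\kappa}$ and reading off $A,\ldots,D$, I expect
\[
\cB_1(u,\wt{u},w,\wt{w})=\frac{p}{\sqrt{k}}\,(1+w\wt{w}s\us{p})\,(u-\ws{w})(\wt{u}-\wt{\us{w}}),
\]
together with the three companions obtained by the relevant shifts and by $\alpha\leftrightarrow\beta$, $p\leftrightarrow q$. Identifying the two roots as $\ws{w}$ and $\wt{\us{w}}$ and the prefactor as $1+w\wt{w}s\us{p}$ reduces to Jacobi $\sn$ addition identities, for instance $\sn(\kappa)\sn(\zeta+\alpha)+\sn(\alpha-\kappa)\sn(\zeta)=\sn(\alpha)\sn(\zeta+\kappa)(1+k^{2}\sn(\kappa)\sn(\alpha-\kappa)\sn(\zeta)\sn(\zeta+\alpha))$, which is what pins $\ws{w}$ down as the $u$-root. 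Establishing these identities, and hence the four factorisations, is the technical heart of the argument and is where I expect the main effort to lie.

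Third I would substitute the four factorisations into (\ref{s2de}) and cancel. The factor $(u-\ws{w})$ occurs once in the numerator (from $\cB_1$) and once in the denominator (from $\cB_2$), and $(\th{u}-\th{\us{w}})$ cancels between the two remaining factors, leaving precisely the ratio $(\wt{u}-\wt{\us{w}})(\wh{u}-\wh{\ws{w}})/[(\wt{u}-\wt{\ws{w}})(\wh{u}-\wh{\us{w}})]$ of (\ref{Q4S}), while the product of the four scalar prefactors supplies an overall factor $p^{2}/q^{2}$ together with the four $(1+\cdots)$ terms arranged exactly as in the statement. It then remains to identify $\chi$: from $\chi^{2}=\cH_2(u,\wh{u})\cH_2(\wt{u},\th{u})/[\cH_1(u,\wt{u})\cH_1(\wh{u},\th{u})]$, the explicit Q4 edge-biquadratics $\cH_1,\cH_2$, and the equation $\cQ_{\alpha,\beta}=0$ obeyed by $u$, I would show the biquadratic ratio collapses to the $u$-independent constant $q^{2}\us{q}^{2}/(p^{2}\us{p}^{2})$, so that $\chi=q\us{q}/(p\us{p})$ up to a sign fixed by consistency. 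Since then $\chi\,(p^{2}/q^{2})=p\us{q}/(q\us{p})$, the scalar prefactor of (\ref{Q4S}) is reproduced and the match is complete. Besides the $\sn$ identities, the delicate point here is the constancy of the biquadratic ratio on solutions, which rests on the relations among the edge-biquadratics around a face of the consistent cube.
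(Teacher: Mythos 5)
Your argument's core coincides with the paper's own proof: the paper likewise reduces everything to the factorisation of the four edge polynomials under the choice (\ref{Q4w}) --- it records precisely $\cQ_{\alpha,\kappa}(u,\wt{u},w,\wt{w})=p(1+w\wt{w}s\us{p})(u-\ws{w})(\wt{u}-\wt{\us{w}})$ and its $\beta$-analogue --- followed by exactly the cancellation of $(u-\ws{w})$ and $(\th{u}-\th{\us{w}})$ that you describe (the extra $\sqrt{k}$ in your prefactor is immaterial, since such constants cancel in the ratio). Your observation that $\cH_1^\dd(w,\wt{w})=BC-AD$, so that the vanishing of $\cH_1^\dd$ is precisely the condition for the bilinear form in $u,\wt{u}$ to split, is a clean way of making the paper's ``reducibility'' remark precise, and your $\sn$ addition identities are the right ones. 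The genuine divergence is in how the scalar prefactor is fixed. The paper never invokes the $\chi^2$ identity: it substitutes $\mr_1,\mr_2$ directly into (\ref{s2d}) and verifies that, relying only on $\cQ_{\alpha,\beta}(u,\wt{u},\wh{u},\th{u})=0$ and for \emph{arbitrary} $w$, the cross-ratio equals $\frac{q\us{q}}{p\us{p}}$ times the ratio of the four $\cQ$'s --- this produces the constant together with its sign. Your route through $\chi^2=\cH_2(u,\wh{u})\cH_2(\wt{u},\th{u})/[\cH_1(u,\wt{u})\cH_1(\wh{u},\th{u})]$ is more symmetric but costs more: you must actually establish that this biquadratic ratio is constant on solutions of Q4 (true, via the relations among the edge biquadratics of the consistent cube, but a separate computation the paper's route avoids), and it determines $\chi$ only up to sign. ``Fixed by consistency'' is the one real gap and should be made concrete: for instance, note that $\chi$ is a rational expression in the coefficients of $\mr_1$, $\mr_2$, $\wh{\mr}_1$, $\wt{\mr}_2$, so constancy of $\chi^2$ forces $\chi$ itself to be constant, and then evaluate (\ref{s2de}) on one explicit configuration to select the positive root. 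Without that step your conclusion holds only up to an overall sign --- which matters, since a sign flip would propagate into the product argument of Proposition \ref{Q4p2}.
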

\begin{proof}
Substitute the definition of $\mr_1$ and $\mr_2$ into the right-hand-side of (\ref{s2d}).
It is straightforward to verify that this brings (\ref{s2d}) to the form
\begin{equation}
\CR{\varphi}{\wt{\varphi}}{\wh{\varphi}}{\th{\varphi}}=
\frac{q\us{q}}{p\us{p}}\frac{\cQ_{\alpha,\kappa}(u,\wt{u},w,\wt{w})\cQ_{\alpha,\kappa}(\wh{u},\th{u},\wh{w},\th{w})}{\cQ_{\beta,\kappa}(u,\wh{u},w,\wh{w})\cQ_{\beta,\kappa}(\wt{u},\th{u},\wt{w},\th{w})},
\label{QQQQ}
\end{equation}
for {\it any} function $w$, relying only on $\cQ_{\alpha,\beta}(u,\wt{u},\wh{u},\th{u})=0$.
The particular choice (\ref{Q4w}) leads to
\begin{equation}
\eqalign{
\cQ_{\alpha,\kappa}(u,\wt{u},w,\wt{w})=p(1+w\wt{w}s\us{p})(u-\ws{w})(\wt{u}-\wt{\us{w}}),\\
\cQ_{\beta,\kappa}(u,\wh{u},w,\wh{w})=q(1+w\wh{w}s\us{q})(u-\ws{w})(\wh{u}-\wh{\us{w}}),
}
\end{equation}
which together with their once-shifted versions can be substituted into (\ref{QQQQ}) resulting in (\ref{Q4S}).
\end{proof}
{\noindent Thus (\ref{Q4S}) is how (\ref{s2d}) looks in the case of the Q4 system.}
Note that $u$ only appears through values on two vertices of a quadrilateral.

The natural extension of the Q4 system to higher dimensions is obtained by complementing (\ref{Q4sys}) with the associations
\begin{equation}
\cB_3=\cQ_{\gamma,\kappa}, \quad \cP_{23}=\cP_{23}^\dd=\cQ_{\beta,\gamma}, \quad \cP_{31}=\cP_{31}^\dd=\cQ_{\gamma,\alpha}.\label{Q4sys3D}
\end{equation}
Whilst the extended singular solution $w$ is taken to be
\begin{equation}
 w=\sqrt{k}\sn(\zeta), \quad \zeta=\zeta_0+n\alpha+m\beta+l\gamma,\label{Q4w3D}
\end{equation}
where $l\in\mathbb{Z}$ is the third independent variable, which is incremented by the $\wb{\phantom{u}}$ shift.
The higher dimensional consideration of the Schwarzian variable associated with the Q4 system results in the following:
\begin{prop}\label{Q4p2}
If $\varphi$ is the Schwarzian variable associated with the three-dimensional Q4 system (\ref{Q4sys}), (\ref{Q4sys3D}) and fixed function $w$ given in (\ref{Q4w3D}), then it satisfies the equation
\begin{equation}
\frac{(\wh{\varphi}-\th{\varphi})(\wb{\varphi}-\wb{\wh{\varphi}})(\wt{\varphi}-\wt{\wb{\varphi}})}{(\wt{\varphi}-\th{\varphi})(\wh{\varphi}-\wb{\wh{\varphi}})(\wb{\varphi}-\wt{\wb{\varphi}})}=1.\label{SKP}
\end{equation}
\end{prop}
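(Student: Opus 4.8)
The plan is to recognise (\ref{SKP}) as the statement that the product of the cross-ratios of $\varphi$ around the three faces of the cube meeting at the base vertex equals unity, and then to evaluate each of those cross-ratios with Proposition \ref{Q4p1}. Write $C_{12}=\CR{\varphi}{\wt{\varphi}}{\wh{\varphi}}{\th{\varphi}}$, and let $C_{23}=\CR{\varphi}{\wh{\varphi}}{\wb{\varphi}}{\hb{\varphi}}$ and $C_{31}=\CR{\varphi}{\wb{\varphi}}{\wt{\varphi}}{\bt{\varphi}}$ be its images under the cyclic relabelling of directions. A purely algebraic calculation then shows that in the product $C_{12}C_{23}C_{31}$ the three factors $(\varphi-\wt{\varphi})$, $(\varphi-\wh{\varphi})$, $(\varphi-\wb{\varphi})$ carrying the base vertex cancel, leaving exactly the left-hand side of (\ref{SKP}); so the first step is to record this telescoping and reduce the claim to $C_{12}C_{23}C_{31}=1$.

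Next I would invoke Proposition \ref{Q4p1} for $C_{12}$ together with its two cyclic images for $C_{23}$ and $C_{31}$. These apply because the fixed function (\ref{Q4w3D}) restricts on each coordinate plane to a function of the form (\ref{Q4w}), and the extended system (\ref{Q4sys3D}) is invariant under $(\alpha,\beta,\gamma)\to(\beta,\gamma,\alpha)$. Introducing $r=\sqrt{k}\,\sn(\gamma)$ and $\us{r}=\sqrt{k}\,\sn(\gamma-\kappa)$, I would multiply the three explicit expressions of type (\ref{Q4S}). Three families of factors cancel immediately, without using any property of $\sn$: the parameter prefactors, since $\frac{p\us{q}}{q\us{p}}\cdot\frac{q\us{r}}{r\us{q}}\cdot\frac{r\us{p}}{p\us{r}}=1$; all factors depending on $u$, each of which occurs once in a numerator and once in a denominator; and the three factors containing the unshifted $w$, namely $1+w\wt{w}s\us{p}$, $1+w\wh{w}s\us{q}$ and $1+w\wb{w}s\us{r}$. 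The disappearance of every $u$-dependent factor is a necessary consistency check, since (\ref{SKP}) is a closed equation in $\varphi$ alone.

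What survives is the ratio of six factors not containing $w$,
\[
\frac{(1+\wh{w}\th{w}s\us{p})(1+\wb{w}\hb{w}s\us{q})(1+\wt{w}\bt{w}s\us{r})}{(1+\wt{w}\th{w}s\us{q})(1+\wh{w}\hb{w}s\us{r})(1+\wb{w}\bt{w}s\us{p})},
\]
and proving this equals $1$ is the heart of the matter. Each factor has the shape $1+\mathrm{s}(\xi)\,\mathrm{s}(\xi+\delta)\,\mathrm{s}(\kappa)\,\mathrm{s}(\delta-\kappa)$ with $\mathrm{s}=\sqrt{k}\,\sn$, where $\xi$ runs over translates of $\zeta$ and $\delta$ over $\alpha,\beta,\gamma$. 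The key step is the elliptic factorisation
\[
1+\mathrm{s}(\xi)\,\mathrm{s}(\xi+\delta)\,\mathrm{s}(\kappa)\,\mathrm{s}(\delta-\kappa)=\frac{g(\xi+\kappa)\,g(\xi+\delta-\kappa)}{g(\xi)\,g(\xi+\delta)},
\]
where $g$ is a Jacobi theta function (a suitably scaled $\theta_4$) whose zeros coincide with the poles of $\sn$; this is exactly the seed-related factorisation already responsible for the reductions in the proof of Proposition \ref{Q4p1}. Substituting it into each of the six factors, both the numerator product and the denominator product collapse to one and the same ratio of theta values, the numerator carrying arguments $\zeta+\alpha+\kappa,\zeta+\beta+\kappa,\zeta+\gamma+\kappa,\zeta+\alpha+\beta-\kappa,\zeta+\beta+\gamma-\kappa,\zeta+\gamma+\alpha-\kappa$ and the denominator $\zeta+\alpha,\zeta+\beta,\zeta+\gamma,\zeta+\alpha+\beta,\zeta+\beta+\gamma,\zeta+\gamma+\alpha$, so the ratio is $1$.

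The main obstacle is precisely this last step: producing the theta factorisation in a usable form and checking that the theta arguments telescope. If one prefers not to exhibit $g$ explicitly, the same conclusion follows by regarding the residual six-factor ratio as an elliptic function of $\zeta$, verifying that this particular combination has no poles and is therefore constant, and fixing the constant to be $1$ by evaluating at a convenient value of $\zeta$ (for example one at which a numerator factor and the matching denominator factor coincide). Either route reduces the whole proposition to a single classical identity for $\sn$.
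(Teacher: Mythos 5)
Your proposal is correct and is essentially the paper's own proof: it multiplies (\ref{Q4S}) by its two cyclic images (\ref{Q4SS}), (\ref{Q4SSS}), notes that the left-hand sides telescope to the left-hand side of (\ref{SKP}) while the parameter prefactors, the $u$-dependent factors and the unshifted-$w$ factors all cancel, and thereby reduces the claim to exactly the six-factor elliptic identity that the paper asserts as a consequence of the addition formula for $\sn$. Your additional detail on that identity is sound apart from one small correction: the factorisation needs a $\xi$-independent constant, namely $1+\mathrm{s}(\xi)\,\mathrm{s}(\xi+\delta)\,\mathrm{s}(\kappa)\,\mathrm{s}(\delta-\kappa)=\frac{\Theta(0)\Theta(\delta)}{\Theta(\kappa)\Theta(\delta-\kappa)}\cdot\frac{\Theta(\xi+\kappa)\Theta(\xi+\delta-\kappa)}{\Theta(\xi)\Theta(\xi+\delta)}$, and no single rescaling of $g$ removes this constant for all three values $\delta\in\{\alpha,\beta,\gamma\}$ at once --- harmlessly so, since each constant depends only on $(\delta,\kappa)$ and therefore appears once in your numerator product and once in your denominator product, cancelling in the end.
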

\begin{proof}
Complement (\ref{Q4S}) with similar relations from the other pairs of lattice directions:
\begin{eqnarray}
\CR{\varphi}{\wh{\varphi}}{\wb{\varphi}}{\wb{\wh{\varphi}}}=
\frac{q\us{r}(1+w\wh{w}s\us{q})(1+\wb{w}\wb{\wh{w}}s\us{q})(\wh{u}-\wh{\us{w}})(\wb{u}-\wb{\ws{w}})}{r\us{q}(1+w\wb{w}s\us{r})(1+\wh{w}\wb{\wh{w}}s\us{r})(\wh{u}-\wh{\ws{w}})(\wb{u}-\wb{\us{w}})},\label{Q4SS}\\
\CR{\varphi}{\wb{\varphi}}{\wt{\varphi}}{\wt{\wb{\varphi}}}=
\frac{r\us{p}(1+w\wb{w}s\us{r})(1+\wt{w}\wt{\wb{w}}s\us{r})(\wb{u}-\wb{\us{w}})(\wt{u}-\wt{\ws{w}})}{p\us{r}(1+w\wt{w}s\us{p})(1+\wb{w}\wt{\wb{w}}s\us{p})(\wb{u}-\wb{\ws{w}})(\wt{u}-\wt{\us{w}})}.\label{Q4SSS}
\end{eqnarray}
They can be found by cyclic permutation, we have used the further notation
\begin{equation*}
r=\sqrt{k}\sn(\gamma), \quad \us{r}=\sqrt{k}\sn(\gamma-\kappa).
\end{equation*}
The left-hand-side of the product of (\ref{Q4S}), (\ref{Q4SS}) and (\ref{Q4SSS}) is exactly the left-hand-side of (\ref{SKP}).
The right-hand-side of this product is
\begin{equation}
\frac{(1+\wh{w}\wh{\wt{w}}s\us{p})(1+\wb{w}\wb{\wh{w}}s\us{q})(1+\wt{w}\wt{\wb{w}}s\us{r})}{(1+\wt{w}\wh{\wt{w}}s\us{q})(1+\wh{w}\wb{\wh{w}}s\us{r})(1+\wb{w}\wt{\wb{w}}s\us{p})}
\end{equation}
which is equal to 1. 
This is an elliptic function identity which is a consequence of the addition formula for the Jacobi $\sn$ function.
\end{proof}
The three-dimensional equation (\ref{SKP}) is the lattice Schwarzian KP equation.
It was first identified as the discrete analogue of the Schwarzian KP equation in \cite{dn}, and is gauge-equivalent to the earlier equation given in \cite{ncwq} for generic values of the parameters\footnote{This gauge-equivalence was independently known to G. W. R. Quispel}.
A geometric-incidence interpretation and connection to the Hirota-Miwa equation was established in \cite{ks}.

\section{The degenerate sub-cases of Q4}\label{DEGENS}
The remaining systems classified in \cite{abs1} are all degenerate cases of the Q4 system.
They share the feature that polynomials around the cube differ only in values of parameters, but with the difference that the parameters appear rationally, we denote them by $p$, $q$ and $s$,
\begin{equation}
\cB_1=\cQ_{p,s}, \quad \cB_2=\cQ_{q,s}, \quad \cP=\cP^\dd=\cQ_{p,q}.\label{dsys}
\end{equation}
The polynomials and the names they were given in \cite{abs1} are reproduced here in Table \ref{list}.
\begin{table}[t]
\begin{center}
\begin{tabular}{ll}
\hline
& $\cQ_{p,q}(u,\wt{u},\wh{u},\th{u})$\\
\hline
Q3$^\delta$& 
$(p-1/p)(u\wt{u}+\wh{u}\th{u})-(q-1/q)(u\wh{u}+\wt{u}\th{u})$\\&$ \qquad-(p/q-q/p)[\wt{u}\wh{u}+u\th{u}+\delta^2(p-1/p)(q-1/q)/4]$\\
Q2& 
$p(u-\wh{u})(\wt{u}-\th{u})-q(u-\wt{u})(\wh{u}-\th{u})$\\&$ \qquad+pq(p-q)(u+\wt{u}+\wh{u}+\th{u}-p^2+pq-q^2)$\\
Q1$^\delta$& $p(u-\wh{u})(\wt{u}-\th{u})-q(u-\wt{u})(\wh{u}-\th{u})+\delta^2 pq(p-q)$\\
A2&$(p-1/p)(u\wh{u}+\wt{u}\th{u})-(q-1/q)(u\wt{u}+\wh{u}\th{u}) - (p/q-q/p)(1+u\wt{u}\wh{u}\th{u})$ \\
A1$^\delta$&$p(u+\wh{u})(\wt{u}+\th{u})-q(u+\wt{u})(\wh{u}+\th{u})- \delta^2pq(p-q)$\\
H3$^\delta$& $p(u\wt{u}+\wh{u}\th{u})-q(u\wh{u}+\wt{u}\th{u})+\delta(p^2-q^2)$\\
H2& $(u-\th{u})(\wt{u}-\wh{u})-(p-q)(u+\wt{u}+\wh{u}+\th{u}+p+q)$\\
H1& $(u-\th{u})(\wt{u}-\wh{u})+p-q$\\
\hline
\end{tabular}
\end{center}
\caption{The polynomials listed in \cite{abs1} which define consistent systems on a cube, where it appears $\delta\in\mathbb{C}$ is a constant parameter.}
\label{list}
\end{table}
The `type-Q' (Q3$^\delta$, Q2 and Q1$^\delta$ in Table \ref{list}) polynomials are the most similar to Q4.
For these systems the analog of proposition \ref{Q4p1}, which is proven by a similar calculation, is as follows:
\begin{prop}\label{typeQp}
Suppose the polynomials in (\ref{gbt}) and (\ref{geqs}) are given by (\ref{dsys}) where $\cQ_{p,q}$ is the polynomial $Q3^\delta$, $Q2$ or $Q1^\delta$ listed in Table \ref{list}.
Then the following equation holds between a non-singular solution of the equation on the left in (\ref{geqs}), $u$, and its associated Schwarzian variable, $\varphi$:
\begin{equation}
\CR{\varphi}{\wt{\varphi}}{\wh{\varphi}}{\th{\varphi}}=
\frac{A(p)}{A(q)}\frac{(\wt{u}-\wt{\us{w}})(\wh{u}-\wh{\ws{w}})}{(\wt{u}-\wt{\ws{w}})(\wh{u}-\wh{\us{w}})},
\label{typeQS}
\end{equation}
where the corresponding choice of function $w$, the meaning of notation $\ws{w}$ and $\us{w}$, and the function $A$ are listed in Table \ref{QwA}.
\end{prop}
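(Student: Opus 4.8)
The plan is to reproduce the proof of Proposition~\ref{Q4p1} almost verbatim, with the Jacobi parameterisation replaced by the rational one of Table~\ref{list}. First I would substitute the M\"obius maps $\mr_1$ and $\mr_2$ — now $v\mapsto\wt v$ and $v\mapsto\wh v$ as defined by $\cB_1=\cQ_{p,s}$ and $\cB_2=\cQ_{q,s}$ in (\ref{dsys}) — into the right-hand side of (\ref{s2d}). Exactly as in the passage from (\ref{s2d}) to (\ref{QQQQ}), and relying only on the equation $\cQ_{p,q}(u,\wt u,\wh u,\th u)=0$, this should bring the cross-ratio to the biquadratic form
\[
\CR{\varphi}{\wt\varphi}{\wh\varphi}{\th\varphi}=C\,\frac{\cQ_{p,s}(u,\wt u,w,\wt w)\,\cQ_{p,s}(\wh u,\th u,\wh w,\th w)}{\cQ_{q,s}(u,\wh u,w,\wh w)\,\cQ_{q,s}(\wt u,\th u,\wt w,\th w)},
\]
where $C$ is an explicit rational function of the parameters, the analogue of the factor $q\us q/(p\us p)$ in (\ref{QQQQ}). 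This reduction is independent of the choice of $w$.

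The decisive step is to factorise the edge biquadratic. Regarding $\cQ_{p,s}(u,\wt u,w,\wt w)$ as a polynomial in $u$ and $\wt u$, its reducibility is governed by the discriminant $\cH_1^\dd(w,\wt w)$, the two-variable analogue of the quantity in (\ref{redR}), which vanishes precisely because $w$ is chosen to satisfy the defining system (\ref{wdef}). For each of $Q3^\delta$, $Q2$ and $Q1^\delta$, and the corresponding $w$ listed in Table~\ref{QwA}, I would carry out the resulting division to obtain
\[
\cQ_{p,s}(u,\wt u,w,\wt w)=A(p)\,(u-\ws w)(\wt u-\wt{\us w}),
\]
together with its shifted versions. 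The point worth emphasising is that, in contrast with Q4, the type-Q polynomials of Table~\ref{list} carry no quartic term $u\wt u\wh u\th u$; consequently the leading coefficient $A(p)$ of this biquadratic depends on the parameters alone and not on $w$, which is the structural reason the factor $(1+w\wt w s\us p)$ present in the Q4 factorisation is absent here. The same division identifies the two shifted functions $\ws w$ and $\us w$ recorded in Table~\ref{QwA}.

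Substituting this factorisation and its three shifts into the biquadratic ratio above, the factors $(u-\ws w)$ and $(\th u-\th{\us w})$ common to numerator and denominator cancel, leaving $C\,A(p)^2/A(q)^2$ multiplying the surviving product $(\wt u-\wt{\us w})(\wh u-\wh{\ws w})/[(\wt u-\wt{\ws w})(\wh u-\wh{\us w})]$. It then remains to verify that the parameter prefactor collapses, that is $C\,A(p)^2/A(q)^2=A(p)/A(q)$, or equivalently $C=A(q)/A(p)$; this reproduces exactly the form (\ref{typeQS}).

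I expect the main obstacle to be the case-by-case factorisation, since $Q3^\delta$, $Q2$ and $Q1^\delta$ have genuinely different functional forms, each demanding the matching $w$ from Table~\ref{QwA}. Reducibility itself is automatic from (\ref{wdef}), so the real labour is the explicit division to read off $A$, $\ws w$ and $\us w$ in each case, and the bookkeeping that the directly-computed prefactor $C$ combines with the leading coefficients to give precisely $A(p)/A(q)$. A minor point of care is that, as noted after (\ref{s2de}), only the square of the analogous prefactor is manifestly symmetric, so if one routes the computation through (\ref{s2de}) rather than the direct reduction above one must fix the correct branch; the direct evaluation of $C$ avoids this ambiguity.
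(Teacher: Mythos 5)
You follow the paper's own route: the paper says Proposition \ref{typeQp} ``is proven by a similar calculation'' to Proposition \ref{Q4p1}, namely reduce (\ref{s2d}) to a ratio of edge biquadratics (the analogue of (\ref{QQQQ})) relying only on $\cQ_{p,q}(u,\wt{u},\wh{u},\th{u})=0$, factorise those biquadratics into linear factors --- reducibility being guaranteed by (\ref{wdef}) --- and cancel the common factors $(u-\ws{w})$ and $(\th{u}-\th{\us{w}})$. Your structural observation that the type-Q polynomials lack the quartic term, so that the leading coefficient of the edge biquadratic is $w$-independent and no analogue of the factors $(1+w\wt{w}s\us{p})$ appears, is correct and is indeed the essential simplification relative to Q4.

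There is, however, one concrete error that would make your final check fail as written: the leading coefficient of your monic factorisation is \emph{not} the $A(p)$ of Table \ref{QwA}. Since you write the factors as $(u-\ws{w})(\wt{u}-\wt{\us{w}})$, that coefficient is forced to equal the $u\wt{u}$-coefficient of $\cQ_{p,s}(u,\wt{u},w,\wt{w})$, which is $p$ for Q1$^\delta$ and Q2, and $p-1/p$ for Q3$^\delta$; call it $B(p)$. For instance, for Q1$^\delta$ with $w=w_0+n\delta p+m\delta q$ a direct division gives
\begin{equation*}
\cQ_{p,s}(u,\wt{u},w,\wt{w})=p\,(u-\ws{w})(\wt{u}-\wt{\us{w}}),
\end{equation*}
whereas Table \ref{QwA} has $A(p)=p/(p-s)$. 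The correct bookkeeping is therefore $C\,B(p)^2/B(q)^2=A(p)/A(q)$, which forces $C=B(q)\us{B}(q)/\bigl(B(p)\us{B}(p)\bigr)$, where $\us{B}(p)$ denotes $p-s$ (respectively $p/s-s/p$ for Q3$^\delta$) and $A=B/\us{B}$; this $C$ is precisely the rational degeneration of the prefactor $q\us{q}/(p\us{p})$ appearing in (\ref{QQQQ}). Your claimed identity $C=A(q)/A(p)$ is generically false: for Q1$^\delta$ it asserts $C=q(p-s)/\bigl(p(q-s)\bigr)$ while the true value is $q(q-s)/\bigl(p(p-s)\bigr)$, and it also contradicts the paper's formula for $\chi^2$, since $\cH_2(u,\wh{u})\cH_2(\wt{u},\th{u})/\bigl(\cH_1(u,\wt{u})\cH_1(\wh{u},\th{u})\bigr)$ evaluates on solutions of Q1$^\delta$ to $q^2(q-s)^2/\bigl(p^2(p-s)^2\bigr)$, not to $A(q)^2/A(p)^2$. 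Once $B$ and $A$ are distinguished in this way, your cancellation argument goes through and reproduces (\ref{typeQS}).
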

\begin{table}[t]
\begin{center}
\begin{tabular}{lllll}
\hline
& $w$ & $\ws{w}$ & $\us{w}$ & $A(p)$ \\
\hline
Q3$^\delta$ & $(\zeta+\delta^2/\zeta)/2, \ \zeta=\zeta_0p^nq^m$ & $w(\zeta s)$ & $w(\zeta/s)$ & $(p-1/p)/(p/s-s/p)$\\
Q2 & $\zeta^2, \ \zeta=\zeta_0+np+mq$ & $w(\zeta+s)$ & $w(\zeta-s)$ & $p/(p-s)$\\
Q1$^\delta$ & $w_0+n\delta p+m\delta q$ & $w+\delta s$ & $w-\delta s$ & $p/(p-s)$\\
\hline
\end{tabular}
\end{center}
\caption{Data relevant to Proposition \ref{typeQp}.}
\label{QwA}
\end{table}
{\noindent It is clear from (\ref{typeQS}) that the Schwarzian variable of the Q3$^\delta$, Q2 and Q1$^\delta$ systems extended to three dimensions also satisfies (\ref{SKP}).}

The scenario for systems defined by the remaining polynomials in Table \ref{list} is slightly different.
This is due to the nature of system (\ref{wdef}), which in these cases admits solutions of a quite different character.
However amongst the solutions of system (\ref{wdef}) there do appear constant or oscillating\footnote{By the term oscillating here and subsequently we mean a function which takes only two distinct values depending on whether $n+m$ is even or odd.} functions, and these result in the right-hand-side of (\ref{s2d}) being constant:
\begin{prop}\label{miura1}
Suppose the polynomials in (\ref{gbt}) and (\ref{geqs}) are given by (\ref{dsys}) where $\cQ_{p,q}$ is one of the polynomials listed in Table \ref{list}.
Also suppose the functions $w$ and $A$ are those corresponding to that polynomial in Table \ref{constwA}.
Then the associated Schwarzian variable satisfies
\begin{equation}
\CR{\varphi}{\wt{\varphi}}{\wh{\varphi}}{\th{\varphi}}=\frac{A(p)}{A(q)}.\label{constCR}
\end{equation}
\end{prop}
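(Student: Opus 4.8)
The plan is to run the argument of Propositions \ref{Q4p1} and \ref{typeQp}, but fed with the constant or oscillating choice of $w$ recorded in Table \ref{constwA}. First I would substitute the M\"obius transformations $\mr_1$ and $\mr_2$ into the right-hand-side of (\ref{s2d}), exactly as in the passage from (\ref{s2d}) to (\ref{QQQQ}), to reach the factored cross-ratio
\[
\CR{\varphi}{\wt{\varphi}}{\wh{\varphi}}{\th{\varphi}}=\chi\,\frac{\cQ_{p,s}(u,\wt{u},w,\wt{w})\,\cQ_{p,s}(\wh{u},\th{u},\wh{w},\th{w})}{\cQ_{q,s}(u,\wh{u},w,\wh{w})\,\cQ_{q,s}(\wt{u},\th{u},\wt{w},\th{w})},
\]
which is the specialisation of (\ref{s2de}) to the degenerate system (\ref{dsys}). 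This identity holds for \emph{any} $w$ and uses only $\cQ_{p,q}(u,\wt{u},\wh{u},\th{u})=0$, the prefactor $\chi$ being built from the parameters $p,q,s$.

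The key mechanism is the reducibility forced by (\ref{wdef}). Since $\cH_1^\dd(w,\wt{w})=0$, the polynomial $\cQ_{p,s}(u,\wt{u},w,\wt{w})$, viewed as a multilinear polynomial in its two $u$-arguments, factors as a constant times $(u-\ws{w})(\wt{u}-\wt{\us{w}})$, and likewise for the other three factors; this is precisely the factorisation that carries (\ref{QQQQ}) to (\ref{Q4S}) and produces (\ref{typeQS}). After substituting it, the terms $(u-\ws{w})$ and $(\th{u}-\th{\us{w}})$ cancel between numerator and denominator, leaving the (\ref{typeQS})-type expression
\[
\CR{\varphi}{\wt{\varphi}}{\wh{\varphi}}{\th{\varphi}}=\frac{A(p)}{A(q)}\,\frac{(\wt{u}-\wt{\us{w}})(\wh{u}-\wh{\ws{w}})}{(\wt{u}-\wt{\ws{w}})(\wh{u}-\wh{\us{w}})}.
\]
The whole point of the entries in Table \ref{constwA} is that, for these choices of $w$, the two parameter-deformed copies coincide, $\ws{w}=\us{w}$ as functions on the lattice. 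Then $\wt{\us{w}}=\wt{\ws{w}}$ and $\wh{\ws{w}}=\wh{\us{w}}$, so the residual $u$-dependent factor is identically $1$ and (\ref{constCR}) follows.

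I expect the real work to be the case-by-case verification underlying Table \ref{constwA}, rather than any single calculation. For each polynomial of Table \ref{list} one must exhibit a constant or oscillating solution $w$ of the biquadratic system (\ref{wdef}), compute the factorisation of $\cQ_{p,s}(u,\wt{u},w,\wt{w})$, and confirm that its two roots satisfy $\ws{w}=\us{w}$, before reading off $A$. The delicate instances are the lower polynomials, where the relevant component of (\ref{wdef}) degenerates and the admissible $w$ is forced to be oscillating (or to sit at a special value); there one must check the factorisation and the cancellation directly, taking care with the parity-dependent site indexing so that $\ws{w}=\us{w}$ is read off correctly. These computations are more elementary than the elliptic identity needed in Proposition \ref{Q4p2} --- no addition formula for $\sn$ is required --- but the bookkeeping across the whole of Table \ref{list} is the genuine labour of the proof.
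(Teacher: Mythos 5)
Your proposal is correct and is essentially the paper's own argument: the paper proves Propositions \ref{Q4p1} and \ref{typeQp} by exactly this substitution-and-factorisation computation, and Proposition \ref{miura1} is then asserted on the grounds that the constant or oscillating choices of $w$ in Table \ref{constwA} make the right-hand side of (\ref{s2d}) constant --- precisely the case-by-case verification you describe. Your identification of the cancellation mechanism as the coincidence $\ws{w}=\us{w}$ of the two factorisation roots (interpreted projectively when they sit at $\infty$) is a clean way of packaging that same computation rather than a different route.
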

\begin{table}[t]
\begin{center}
\begin{tabular}{lll}
\hline
& $w$ & $A(p)$ \\
\hline
Q3$^\delta$ & $\infty$ & $(p-1/p)/(p/s-s/p)$\\
Q3$^0$ & $\infty$ or $0$ & $(p-1/p)/(p/s-s/p)$\\
Q2 & $\infty$ & $p/(p-s)$\\
Q1$^\delta$ & $\infty$ & $p/(p-s)$\\
Q1$^0$ & $\vartheta'$ & $p/(p-s)$\\
A2 & $0$ ($n+m$ odd), $\infty$ ($n+m$ even) & $(p-1/p)/(p/s-s/p)$\\
A1$^\delta$ & $\infty$ & $p/(p-s)$\\
A1$^0$ & $(-1)^{n+m}\vartheta$ & $p/(p-s)$\\
H3$^\delta$ & $0$ ($n+m$ odd), $\infty$ ($n+m$ even) & $1/(p^2-s^2)$\\
H3$^\delta$ & $\infty$ & $p/(p/s-s/p)$\\
H3$^0$ & $\infty$ or $0$ & $p/(p/s-s/p)$\\
H2 & $\infty$ & $1/(p-s)$\\
H1 & $\infty$ & $1/(p-s)$\\
\hline
\end{tabular}
\end{center}
\caption{Data relevant to Propositions \ref{miura1} and \ref{miurainv}. Arbitrary constants $\vartheta$ and $\vartheta'$ are chosen from $\ecp\setminus\{0\}$ and $\ecp$ respectively. Note that the singular solutions here are either constant or oscillating functions.  We also remark that the proliferation of $\infty$ as the singular solution is a result of the choice of canonical forms made in \cite{abs1}, a M\"obius change of variables would of course alter this value.}
\label{constwA}
\end{table}
{\noindent Equation (\ref{constCR}) may be written as $\cQ_{A(p),A(q)}(\varphi,\wt{\varphi},\wh{\varphi},\th{\varphi})=0$ where $\cQ_{p,q}$ is the polynomial Q1$^0$ appearing in Table \ref{list}, this is the lattice Schwarzian KdV equation.}
It was first identified as the discrete analog of the Schwarzian KdV equation in \cite{nc} and is a parameter sub-case of the earlier NQC equation \cite{nqc}.
A geometric interpretation connected with discrete conformal maps was given in \cite{bs1}, and the symmetries of this equation were analysed in \cite{lps}.
The reduction from the lattice Schwarzian KP equation to equation (\ref{constCR}) was characterised in terms of the geometric-incidence picture in \cite{ks}, this reduction can also be characterised in terms of the lattice Schwarzian KP B\"acklund transformation as a two-cycle solution, which was shown in \cite{atk3}.

Notice that Proposition \ref{miura1} applies to {\it every} polynomial in Table \ref{list}, the type-Q polynomials recur here but we have made a restricted choice for the function $w$.
The reason for including them again is because of a stronger result which applies: loosely speaking, (\ref{constCR}) actually characterises the co-domain of $\mu$.

The result is most simply formulated by complementing the transformation to the Schwarzian variable described in Proposition \ref{miura1} with explicit equations for the inverse transformation.
As indicated in Section \ref{ODE} it is fairly convenient to proceed by writing 
\begin{equation}
\ms(1)=v_1, \quad \ms(0)=v_2, \quad \ms(\infty)=v_3,
\end{equation}
to deal with the three scalar functions $v_1,v_2,v_3:\mathbb{Z}^2\rightarrow \ecp$ in terms of which the inverse of $\ms$ is given by (\ref{invs}).
The forward transformation to the Schwarzian variable $\varphi$ can then be written
\begin{equation}
\eqalign{
\cB_1(u,\wt{u},v_i,\wt{v}_i)=0,\quad \cB_2(u,\wh{u},v_i,\wh{v}_i)=0, \quad i\in\{1,2,3\},\\
\varphi=\CR{v_1}{v_2}{v_3}{w}.
}\label{str}
\end{equation}
That is, obtaining first three particular solutions $v_1$, $v_2$ and $v_3$ of the B\"acklund system (\ref{gbt}) which corresponds to the first part of system (\ref{str}), and then constructing $\varphi$ rationally from them as in the second part.
Elimination of $v_1$ from (\ref{str}) results in the following explicit system for the inverse transformation:
\begin{eqnarray}
\eqalign{
(v_3-w)\cB_1(u,\wt{u},v_2,\wt{w})\wt{\varphi}-(v_2-w)\cB_1(u,\wt{u},v_3,\wt{w})\varphi = 0,\\
\cB_1(u,\wt{u},v_2,\wt{v}_2) = 0, \quad \cB_1(u,\wt{u},v_3,\wt{v}_3)=0,
}\label{miurap}\\
\eqalign{
(v_3-w)\cB_2(u,\wh{u},v_3,\wh{w})\wh{\varphi}-(v_2-w)\cB_2(u,\wh{u},v_3,\wh{w})\varphi = 0,\\
\cB_2(u,\wh{u},v_2,\wh{v}_2) = 0, \quad \cB_2(u,\wh{u},v_3,\wh{v}_3)=0.
}\label{miuraq}
\end{eqnarray}
\begin{prop}\label{miurainv}
Make again the suppositions of Proposition \ref{miura1}. 
If $\varphi$ is a non-singular solution of (\ref{constCR}) then the systems (\ref{miurap}) and (\ref{miuraq}) define compatible bi-rational mappings $(u,v_2,v_3)\mapsto(\wt{u},\wt{v}_2,\wt{v}_3)$ and $(u,v_2,v_3)\mapsto(\wh{u},\wh{v}_2,\wh{v}_3)$.
If $u$, $v_2$ and $v_3$ are functions determined by these mappings and we define a further function $v$ by the equation
\begin{equation}
\frac{\varphi}{c}=\CR{v}{v_2}{v_3}{w}, \quad c\in\ecp,\label{vv}
\end{equation}
then $(u,v)$ satisfy (\ref{gbt}), (\ref{geqs}), and the Schwarzian variable associated to $u$ is $\varphi$.
\end{prop}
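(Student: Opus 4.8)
The plan is to read Proposition~\ref{miurainv} as the explicit inversion of the forward transformation of Proposition~\ref{miura1}, and to verify its three assertions in turn: that (\ref{miurap}) and (\ref{miuraq}) each define a bi-rational shift map, that these two maps are compatible, and that the functions they generate — together with $v$ defined by (\ref{vv}) — solve the original system (\ref{gbt}), (\ref{geqs}) with $\varphi$ as associated Schwarzian variable. The guiding idea is that (\ref{miurap})--(\ref{miuraq}) are precisely the conditions for three scalar functions $v_1,v_2,v_3$ to be simultaneous B\"acklund images of $u$ once $v_1$ is recovered from the cross-ratio relation $\varphi=\CR{v_1}{v_2}{v_3}{w}$: the relations $\cB_1(u,\wt{u},v_i,\wt{v}_i)=0$ for $i=2,3$ are the second and third lines of (\ref{miurap}), while the first line is exactly the result (cf.\ (\ref{hle})) of eliminating $v_1$ between $\cB_1(u,\wt{u},v_1,\wt{v}_1)=0$ and the cross-ratio expressions for $\varphi$ and $\wt{\varphi}$. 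Thus the maps encode a function $\ms$ through (\ref{invs}), and the claim reduces to producing an $\ms$ satisfying (\ref{ssys}).

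First I would dispose of bi-rationality. In (\ref{miurap}) the polynomial $\cB_1$ is of degree one in each argument, so its first equation is affine in $\wt{u}$ and solves rationally for $\wt{u}$; the remaining two equations are then affine in $\wt{v}_2$ and $\wt{v}_3$ respectively and give these rationally. Because every defining relation is multi-affine — degree one in each of the old and new variables — the same elimination run backwards expresses $(u,v_2,v_3)$ rationally in terms of $(\wt{u},\wt{v}_2,\wt{v}_3)$, so the map is bi-rational; the non-singularity conditions (\ref{ndc}) together with non-degeneracy of the type (\ref{redR}) keep the relevant leading coefficients from vanishing. The map defined by (\ref{miuraq}) is handled identically.

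The crux is the compatibility of the tilde and hat maps, and this is the step I expect to be the main obstacle. Having recovered $v_1$ from $\varphi=\CR{v_1}{v_2}{v_3}{w}$, the content of (\ref{miurap}) and (\ref{miuraq}) is that all three $v_i$ transform by the M\"obius maps $\mr_1,\mr_2$ determined by (\ref{gbt}); equivalently the reconstructed $\ms$ obeys $\wt{\ms}=\mr_1\cdot\ms$ and $\wh{\ms}=\mr_2\cdot\ms$. Commutativity of the two shifts is therefore equivalent to the compatibility $\wh{\mr}_1\cdot\mr_2=\wt{\mr}_2\cdot\mr_1$ of the system (\ref{ssys}), which by consistency on the cube of the polynomials (\ref{dsys}) amounts to the generated $u$ solving the quad equation $\cP(u,\wt{u},\wh{u},\th{u})=0$. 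The task is to prove that this closure actually holds for the $u$ produced by the maps, and here the hypothesis that $\varphi$ satisfies (\ref{constCR}) is indispensable. Concretely I would compute $\th{u}$ in the two orders (tilde-then-hat and hat-then-tilde) from the explicit affine solves, and reduce the resulting identity, together with the analogous ones for $\th{v}_2$ and $\th{v}_3$, to (\ref{constCR}); the special constant or oscillating form of $w$ taken from Table~\ref{constwA} is precisely what makes the single-quadrilateral relation (\ref{constCR}) hold, and is what drives this closure. A more conceptual route, worth recording but less self-contained, is to transfer compatibility from the already-established forward direction using the generalised transitivity (\ref{muprop}).

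With compatibility in hand the reconstruction closes. The generated $u$ solves $\cP$ and, by (\ref{ndc}), is non-singular, so $\ms$ genuinely satisfies (\ref{ssys}); writing $v=\ms(c)$ — which is the content of (\ref{vv}), since that relation is obtained from $\varphi=\ms^{-1}(w)$ by composing $\ms^{-1}$ with the scaling fixing the two base points carrying $v_2$ and $v_3$ — produces a true B\"acklund partner of $u$, evaluation of a solution of (\ref{ssys}) at a constant being a solution of (\ref{gbt}) just as in Section~\ref{ODE}. Hence $(u,v)$ satisfy (\ref{gbt}) and therefore (\ref{geqs}), while $\varphi=\ms^{-1}(w)$ makes $\varphi$ the Schwarzian variable associated to $u$ in the sense of Definition~\ref{mudef}; the free constant $c$ records the usual B\"acklund parameter.
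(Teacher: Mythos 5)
Your proposal is correct and takes essentially the same route as the paper: the paper obtains (\ref{miurap})--(\ref{miuraq}) by eliminating $v_1$ from (\ref{str}), notes that non-singularity of $\varphi$ is what makes the mappings bi-rational, explains (\ref{vv}) as writing $v=\ms(c)$, and then asserts that the remaining verification is a straightforward case-by-case calculation. Your outline --- bi-rationality from the multi-affine structure, the compatibility closure reduced case-by-case to (\ref{constCR}) using the choices of $w$ in Table \ref{constwA}, and reconstruction through $\ms$ and Definition \ref{mudef} --- is precisely that calculation spelled out.
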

{\noindent Thus system (\ref{miurap}), (\ref{miuraq}) re-constructs $u$ from its Schwarzian variable $\varphi$.}
The proof of Proposition \ref{miurainv} is a straightforward case-by-case calculation.
The non-singularity condition is required for the mappings to be bi-rational.
By examination of (\ref{ndc}) for the lattice Schwarzian KdV equation (\ref{constCR}), the non-singularity just means that $\varphi$ takes distinct values on any pair of adjacent vertices.

The first equations of (\ref{miurap}) and (\ref{miuraq}) are just (\ref{hle}) when $\cR$ is the polynomial defining the Riccati-type B\"acklund equations (\ref{gbt}).
In writing the inverse transformation explicitly we have made several choices.
First, there is some preference in lattice orientation, the mappings described are in the direction of forward lattice shifts, however by inspection one sees that the same system yields mappings in the other directions by a simple re-arrangement (they are bi-rational).
Second, we have chosen to write the transformation as coupled rank-3 mappings (coupled first-order systems), elimination of $v_2$ or $v_2$ {\it and} $v_3$ would lead to rank-2 or scalar equations respectively, but they would be on a larger lattice stencil.
Clearly applying the inverse transformation described yields intermediary functions $v_2$ and $v_3$.
The re-construction of $v_1$ from $v_2$, $v_3$ and $\varphi$ is immediate from the last equality in (\ref{str}), the equation for $v$ (\ref{vv}) is obtained by writing $v=\ms(c)$.

Consider an example.
In the particular case of polynomial H1 in Table \ref{list} the transformation described in Propositions \ref{miura1} and \ref{miurainv} connects the pair of equations
\begin{equation}
\CR{\varphi}{\wt{\varphi}}{\wh{\varphi}}{\th{\varphi}}=\frac{q-s}{p-s}, \qquad (u-\th{u})(\wt{u}-\wh{u})=p-q.\label{example}
\end{equation}
The mappings defined by (\ref{miurap}) and (\ref{miuraq}) for this example are as follows:
\begin{eqnarray}
\fl \qquad \wt{u}=\frac{\wt{\varphi} v_2-\varphi v_3}{\wt{\varphi}-\varphi}, \quad \wt{v}_2=u+\frac{(p-s)(\varphi-\wt{\varphi})}{\varphi(v_2-v_3)}, \quad \wt{v}_3=u+\frac{(p-s)(\varphi-\wt{\varphi})}{\wt{\varphi}(v_2-v_3)},\\
\fl \qquad \wh{u}=\frac{\wh{\varphi} v_2-\varphi v_3}{\wh{\varphi}-\varphi}, \quad \wh{v}_2=u+\frac{(q-s)(\varphi-\wh{\varphi})}{\varphi(v_2-v_3)}, \quad \wh{v}_3=u+\frac{(q-s)(\varphi-\wh{\varphi})}{\wh{\varphi}(v_2-v_3)}.
\end{eqnarray}
(Notice we could take $s=0$ without loss of generality.)
A nice feature of this example is that the equivalence relation induced by $\mu$ (cf. the earlier discussion of property (\ref{muprop})) on the set of non-singular solutions of the equation on the right in (\ref{example}) is also a local symmetry\footnote{We are grateful to S. Butler for pointing out this symmetry, which can be deduced from the soliton solution given in \cite{nah}.}:
\begin{equation}
\fl \qquad u\rightarrow(a+1/a)(u+b)/2+(-1)^{n+m}(a-1/a)(u+c)/2,\quad a\neq 0,b,c\in\mathbb{C}.\label{exsym}
\end{equation}
Thus the transformation to the Schwarzian variable is a bijective correspondence between non-singular solutions of the two equations in (\ref{example}) modulo the M\"obius group and the symmetry group (\ref{exsym}) respectively.

Propositions \ref{miura1} and \ref{miurainv} connect all of the integrable equations classified in \cite{abs1}, with the exception of Q4, to the lattice Schwarzian KdV equation (\ref{constCR}).
By analogy with the situation for continuous KdV-type equations discovered in \cite{ss,ssy}, a transformation between Q4 and the lattice Schwarzian KdV equation should not be expected. 
Here this is connected to the non-existence of constant or oscillating singular solutions of Q4.

As we have described in Section \ref{3D}, the transformation to the Schwarzian variable is compatible with the multidimensional consistency.
In the cases where the transformation connects to the lattice Schwarzian KdV equation, as in Propositions \ref{miura1}, \ref{miurainv} (and Proposition \ref{miura2} later on) this implies commutativity with the natural auto-B\"acklund transformation. 

We remark that in the case of the fifth entry in Table \ref{constwA}, the Propositions \ref{miura1} and \ref{miurainv} define an auto-transformation of the lattice Schwarzian KdV equation.
Note however that $A(p)\neq p$, therefore the transformation connects instances of the same equation, but with different values of its parameters.

\section{Less symmetric systems}\label{THEREST}
We now consider the situation for systems of consistent polynomials of the kind described in Section \ref{ABT} with different symmetry than those in \cite{abs1}.

Several systems where $\cP\neq\cP^\dd$ in (\ref{geqs}) were listed in \cite{atk1}, although the polynomials $\cP$ and $\cP^\dd$ are distinct, each can be taken as one of those listed in Table \ref{list}.
The B\"acklund equations themselves are of the form
\begin{equation}
\cB_1=\cF_p, \quad \cB_2=\cF_q,\label{osys}
\end{equation}
and can be extended to higher dimensions by associating $\cB_3=\cF_r$ and so on for some set of parameters $p,q,r\ldots$.
Some but not all of these systems admit a choice of function $w$ such that the transformation to the Schwarzian variable is invertible in the same way as described in Propositions \ref{miura1} and \ref{miurainv}.
\begin{prop}\label{miura2}
Suppose the polynomials in (\ref{gbt}) are given by (\ref{osys}) where $\cF_p$ is one of the polynomials listed in Table \ref{therest}.
Also suppose the polynomials $\cP$, $\cP^\dd$ and functions $w$ and $A$ are those corresponding to $\cF_p$ in the same table.
Then the Schwarzian variable associated with a solution of the equation on the left in (\ref{geqs}) satisfies (\ref{constCR}).
Conversely, if $\varphi$ is a non-singular solution of (\ref{constCR}) then the systems (\ref{miurap}) and (\ref{miuraq}) define compatible bi-rational mappings $(u,v_2,v_3)\mapsto(\wt{u},\wt{v}_2,\wt{v}_3)$ and $(u,v_2,v_3)\mapsto(\wh{u},\wh{v}_2,\wh{v}_3)$.
If $u$, $v_2$ and $v_3$ are functions determined by these mappings and we define a further function $v$ by (\ref{vv}) then $(u,v)$ satisfy (\ref{gbt}), (\ref{geqs}), and the Schwarzian variable associated to $u$ is $\varphi$.
\end{prop}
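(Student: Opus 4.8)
The plan is to prove the two assertions separately, running in parallel to the proofs of Propositions~\ref{miura1} and~\ref{miurainv} and verifying the finite list of polynomials in Table~\ref{therest} case by case. For the forward assertion I would start from the cross-ratio identity (\ref{s2d}), substitute the M\"obius maps $\mr_1,\mr_2$ determined by $\cB_1=\cF_p$ and $\cB_2=\cF_q$, and bring the right-hand side to a ratio of factors of $\cF_p$ and $\cF_q$, exactly as the proofs of Propositions~\ref{Q4p1} and~\ref{typeQp} do for the Q4 and type-Q systems. For the constant or oscillating function $w$ recorded in Table~\ref{therest} these factors, which a priori depend on $u$ and its shifts, simplify so that the $u$-dependence cancels identically, leaving the constant $A(p)/A(q)$; this is exactly (\ref{constCR}), the lattice Schwarzian KdV equation. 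The only genuinely new bookkeeping relative to Proposition~\ref{miura1} is that $\cP$ is now distinct from $\cP^\dd$, so the reduction must be carried out with the correct member of Table~\ref{list} playing the role of the equation $\cP(u,\wt{u},\wh{u},\th{u})=0$ satisfied by $u$.

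For the converse I would take $\varphi$ to be a non-singular solution of (\ref{constCR}) and analyse the systems (\ref{miurap}) and (\ref{miuraq}). Each is linear in the forward-shifted unknowns: the first equation of (\ref{miurap}) is the relation (\ref{hle}) associated with $\cB_1$ and determines $\wt{u}$, after which the remaining two equations fix $\wt{v}_2$ and $\wt{v}_3$; likewise (\ref{miuraq}) determines $(\wh{u},\wh{v}_2,\wh{v}_3)$. I would check that the two resulting maps are bi-rational, the non-singularity of $\varphi$ --- distinctness of $\varphi$ on adjacent vertices, as noted after Proposition~\ref{miurainv} --- guaranteeing that the relevant denominators do not vanish so that each map admits a rational inverse. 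Reconstructing $v$ by (\ref{vv}) then produces a pair $(u,v)$, and it remains to confirm that the B\"acklund relations (\ref{gbt}) hold with $\varphi$ the Schwarzian variable associated to $u$ and that $u,v$ satisfy the lattice equations $\cP=0$, $\cP^\dd=0$ of (\ref{geqs}); since the $v_i$ are constructed to satisfy the $\cB_1,\cB_2$ relations, the latter follows from the consistency of the polynomials on the cube, just as in Proposition~\ref{miurainv}.

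I expect the main obstacle to be the compatibility of the two bi-rational maps, namely that composing them in either order around an elementary quadrilateral gives the same result. In terms of the reformulation (\ref{ssys}) this is the demand $\wh{\mr}_1\cdot\mr_2=\wt{\mr}_2\cdot\mr_1$, that the reconstructed function $\ms$ close up consistently, and it is here that the cross-ratio equation (\ref{constCR}) for $\varphi$ must enter in an essential way rather than as mere bookkeeping. A second, more structural point is the existence of the fixed function itself: the construction presupposes a constant or oscillating solution $w$ of the system (\ref{wdef}), whose polynomials $\cH_1^\dd,\cH_2^\dd$ are built from $\cB_1=\cF_p$ and $\cB_2=\cF_q$. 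Whether such a $w$ exists depends on the particular $\cF_p$, and this is exactly why only some of the less-symmetric systems of \cite{atk1} --- those collected in Table~\ref{therest} --- admit the invertible transformation, while the others do not.
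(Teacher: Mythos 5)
Your proposal is correct and takes essentially the same route as the paper, which in fact supplies no separate argument for Proposition~\ref{miura2} beyond treating it as the same ``straightforward case-by-case calculation'' as Propositions~\ref{miura1} and~\ref{miurainv}: the forward direction by evaluating the right-hand side of (\ref{s2d}) for each entry of Table~\ref{therest} with the tabulated $w$, and the converse by verifying bi-rationality and compatibility of the mappings (\ref{miurap}), (\ref{miuraq}), with non-singularity of $\varphi$ ensuring the denominators are non-vanishing. One peripheral point in your closing paragraph is slightly off but does not affect the proof: the systems of \cite{atk1} excluded from Table~\ref{therest} are not excluded because a constant or oscillating singular solution $w$ fails to exist (e.g.\ the system (\ref{non-example}) admits $w=\infty$), but because for those systems the resulting $\varphi$ degenerates to a constant or oscillating function, or the right-hand side of (\ref{s2d}) retains $u$-dependence, so the transformation fails to be invertible.
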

\begin{table}[t]
\begin{center}
\begin{tabular}{lllll}
\hline
$\cF_p$ & $\cP$ & $\cP^\dd$ & $w$ & $A(p)$ \\
\hline
$uv+\wt{u}\wt{v}-p(u\wt{v}+v\wt{u})-(p-1/p)(v\wt{v}+\delta^2p/4)$ & Q3$^\delta$ & Q3$^0$ & $0^{\pm 1}$ & $1-p^{\pm 2}$\\
$(v-\wt{v})(u-\wt{u})-p(2v\wt{v}-u-\wt{u})-p^2(v+\wt{v}+p)$ & Q2 & Q1$^1$ & $\infty$ & $p$\\
$(v-\wt{v})(u-\wt{u})-p(v\wt{v}-1)$ & Q1$^1$ & Q1$^0$ & $\vartheta'$ & $p$\\
$(v-\wt{v})(u-\wt{u})-pv\wt{v}$ & Q1$^0$ & Q1$^0$ & $\vartheta$ & $p$\\
$pv\wt{v}-uv-\wt{u}\wt{v}-\delta$ & H3$^\delta$ & H3$^0$ & $\infty$ & $p^2$\\
$(v+\wt{v})(u+\wt{u})-p(v\wt{v}+\delta^2)$ & A1$^\delta$ & A1$^0$ & $(-1)^{n+m}\vartheta$ & $p$\\
$(v+\wt{v})u\wt{u}-p(1-v/2)(1-\wt{v}/2)$ & H3$^1$ & $\!\!^\dagger$A1$^0$ & $(-1)^{n+m}\vartheta$ & $p^2$\\
$(v+\wt{v})(u-\wt{u})-p(v-\wt{v})$ & Q1$^1$ & A1$^0$ & $(-1)^{n+m}\vartheta$ & $p$\\
\hline
\end{tabular}
\end{center}
\caption{Data relevant to Proposition \ref{miura2}. Arbitrary constants $\vartheta$ and $\vartheta'$ are chosen from $\ecp\setminus\{0\}$ and $\ecp$ respectively. $\dagger$ indicates application of the point transformation $p\rightarrow p^2$, $q\rightarrow q^2$ to the parameters. The B\"acklund transformations here are a subset of those given originally in Table 3 of \cite{atk1}.}
\label{therest}
\end{table}
Let us exhibit a simple example: the fifth entry in Table \ref{therest}.
The transformation to the Schwarzian variable in this non-symmetric case is most fully described as connecting the Schwarzian equation to the full system (\ref{gbt}), (\ref{geqs}):
\begin{equation}
\fl\qquad\CR{\varphi}{\wt{\varphi}}{\wh{\varphi}}{\th{\varphi}}=\frac{p^2}{q^2},\qquad
\eqalign{
pv\wt{v}-uv-\wt{u}\wt{v}=\delta, \quad qv\wh{v}-uv-\wh{u}\wh{v}=\delta,\\
p(u\wt{u}+\wh{u}\th{u})-q(u\wh{u}+\wt{u}\th{u})=\delta(q^2-p^2), \\ 
p(v\wt{v}+\wh{v}\th{v})-q(v\wh{v}+\wt{v}\th{v})=0.
}
\label{example2}
\end{equation}
In this instance the mappings described in Proposition \ref{miura2} are as follows:
\begin{equation}
\fl\qquad
\eqalign{
\wt{u}=p\frac{v_2\wt{\varphi}-v_3\varphi}{\wt{\varphi}-\varphi}, \quad \wt{v}_2=\frac{(\wt{\varphi}-\varphi)(u v_2 + \delta)}{p\varphi(v_3-v_2)}, \quad \wt{v}_3 = \frac{(\wt{\varphi}-\varphi)(u v_3 + \delta)}{p\wt{\varphi}(v_3-v_2)},\\
\wh{u}=q\frac{v_2\wh{\varphi}-v_3\varphi}{\wh{\varphi}-\varphi}, \quad \wh{v}_2=\frac{(\wh{\varphi}-\varphi)(u v_2 + \delta)}{q\varphi(v_3-v_2)}, \quad \wh{v}_3 = \frac{(\wh{\varphi}-\varphi)(u v_3 + \delta)}{q\wh{\varphi}(v_3-v_2)}.
}
\end{equation}
Here the partition induced by $\mu$ on the set of non-singular solutions of the system on the right in (\ref{example2}) is not equivalent to a local symmetry.

We remark that for the fourth entry in Table \ref{therest}, Proposition \ref{miura2} defines an auto-transformation for the lattice Schwarzian KdV equation, moreover $A(p)=p$ unlike the transformation of the previous section (of which the transformation here is a degeneration).

Loosely speaking, the fact that the constraint on functions in the co-domain of $\mu$ is two-dimensional is what leads to the existence of the inverse transformation described in propositions \ref{miurainv} and \ref{miura2}.
What tends to happen in more degenerate systems of compatible polynomials is that the transformation to the Schwarzian variable reduces dimensionality, specifically $\varphi$ is constant along some direction on the lattice, which leads to non-invertibility of the transformation.

As an example of this we take a system which looks very much like those listed in Table \ref{therest}. 
In fact we will consider again the fifth entry of Table \ref{therest}, but with the roles of $u$ and $v$ interchanged so that
\begin{equation}
\cF_p=pu\wt{u}-uv-\wt{u}\wt{v}-\delta, \quad \cP={\rm H3}^0, \quad \cP^\dd={\rm H3}^\delta.\label{non-example}
\end{equation}
We take the singular solution $w=\infty$ and again the right-hand-side of (\ref{s2d}) is {\it constant} and thus independent of the solution $u$, so (\ref{s2d}) {\it can} be used to characterise the co-domain of $\mu$. 
But this constant is equal to 1.
Examination of (\ref{s2d}) reveals that this implies either $\wt{\varphi}=\wh{\varphi}$ or $\varphi=\th{\varphi}$, and in fact it turns out that {\it both} are true, thus $\varphi$ is an oscillating function and all information about the original solution $u$ is lost.
Therefore the B\"acklund transformation defined by the fifth entry in table \ref{therest} can in this way be distinguished from its inverse (\ref{non-example}), and in fact the same feature occurs in all other entries of this table except the fourth.

A more obvious degenerate system is the {\it linear} multidimensionally consistent equation defined by the polynomial
\begin{equation}
\fl\qquad \cQ_{p,q}(u,\wt{u},\wh{u},\th{u})=(p_1-q_1)u - (p_2-q_1)\wt{u} - (p_1-q_2)\wh{u} + (p_2-q_2)\th{u},
\end{equation}
whence the compatible system (\ref{gbt}), (\ref{geqs}) is defined by (\ref{dsys}) (cf. \cite{atk2}).
Here the singular solution is $w=\infty$ which leads to a Schwarzian variable that is {\it constant}.

Suffice it to say that for systems listed in \cite{atk1} but which are not included in Table \ref{therest} or gauge-related to systems in Table \ref{constwA}, either $\varphi$ is a constant or oscillating function (like in the examples above), or else (\ref{s2d}) cannot be used to characterise the co-domain of $\mu$ because the right-hand-side depends on $u$ (like in the case of Q4 described in Section \ref{ADLER}).
It may be beneficial to apply the Schwarzian construction to more examples by considering the new systems of consistent polynomials (that also appear to be less symmetric around the cube) which were found recently in \cite{bol}.

\ack
This research was funded by the Australian Research Council Discovery Grant DP 0985615.
We are grateful to F. W. Nijhoff for his comments on the draft manuscript.

\section*{References}

\end{document}